\newcommand{\PNP}{P $\neq$ NP\xspace}
\newcommand{\Past}[1]{\ensuremath{\mathop{\mathit{Past}}({#1})}}
\newcommand{\Abs}[1]{\ensuremath{\mathopen{|}{#1}\mathclose{|}}}
\newcommand{\HowFarBack}[2]{\Abs{ {#1} - {#2} }}
\newcommand{\Ribbon}{\ensuremath{\mathit{Ribbon}}\xspace}
\begin{document}

\title{Computation and Spacetime Structure}
\author{Mike Stannett}
\institute{University of Sheffield\\
  \email{m.stannett@dcs.shef.ac.uk}
}
\date{6 March 2011}
\maketitle

\begin{abstract}
We investigate the relationship between computation and spacetime structure, focussing on the role of closed timelike curves (CTCs) in promoting computational speedup. We note first that CTC traversal can be interpreted in two distinct ways, depending on ones understanding of spacetime. Focussing on one interpretation leads us to develop a toy universe in which no CTC can be traversed more than once, whence no computational speedup is possible. Focussing on the second (and more standard) interpretation leads to the surprising conclusion that CTCs act as perfect information repositories: just as black holes have entropy, so do CTCs. If we also assume that \PNP, we find that all observers agree that, even if unbounded time travel existed in their youth, this capability eventually vanishes as they grow older. Thus the computational assumption \PNP is also an assumption concerning cosmological structure.
\end{abstract}

\section{Introduction}
\label{sec:introduction}

In the presence of spacetime curvature, the run-time of a program typically depends on who does the observing; the time registered by a clock co-moving with a computational system may differ from that registered by an observer watching the system from elsewhere. The existence of such discrepancies lies at the heart of \emph{relativistic hypercomputation} schemes using e.g. Malament-Hogarth spacetimes \cite{Hog04}, slow Kerr black holes \cite{One95,EN02} and closed timelike curves \cite{ANS11}. These schemes indicate that cosmological anomalies allow the resolution of formally undecidable problems, so it seems not unlikely that they would also allow problems in NP$\setminus$P (if any) to be solved in polynomial time. We investigate in this paper whether this is necessarily the case.

Suppose, then, that we live in a universe which contains closed timelike curves (CTCs). An observer who traverses a CTC considers himself to be doing nothing out of the ordinary; he travels forward in time as usual, never exceeding light-speed, but eventually finds himself at a point in spacetime he has already visited previously. In a sense, then, the observer has ``travelled into the past'', but it is important to note that at no time does he violate any physical laws as viewed from his own co-moving frame of reference, nor does he consider himself to be moving ``backwards in time''. He is simply following a path through spacetime that happens to include a loop.

Given the capability of time travel, a simple argument then suggests that P = NP. For suppose $A$ is a deterministic program for solving some problem in NP, and construct the algorithm $A'$ in Fig. \ref{fig:algorithm-A'}.

\begin{figure}
\texttt{\begin{itemize}
\item[0] output v
\item[1] start A and let it run to completion
\item[2] let v be the result generated by A
\item[3] send v back to time 0
\end{itemize}}
\caption{Algorithm $A'$ exploits time travel to solve in constant time the same problem that $A$ solves in super-polynomial time.}
\label{fig:algorithm-A'}
\end{figure}
Although $A'$ may run for some considerable time, the observer always obtains the required output at step $0$. The total runtime of $A'$ may be superpolynomial, since it includes step 1 (running $A$) but nonetheless the \emph{result} is produced at step 0, and in this sense $A'$ can be said to solve the problem in a fixed amount of time. Since $A'$ is deterministic and solves in constant time the same problem as $A$, that problem must be in P, whence (loosely speaking) P = NP. 

Unfortunately, this apparently simple argument is logically incomplete, since it relies on unstated assumptions concerning the nature of CTCs, and these assumptions need not be generally valid. There are two essentially distinct ways in which CTCs might be exploited to implement computational speed-up. In the absence (so far) of experimental data confirming the existence of CTCs and the experiences of observers traversing them, the viability of these two computational schemes depends upon ones philosophical interpretation of relativity theory. In this paper we focus on one of these approaches; nonetheless we briefly discuss the consequences of choosing the other interpretation in Sec. \ref{sec:discussion}.

\section{CTC Computation}
\label{sec:version-one}

Consider the following science-fiction clich{\'e}: a historian wants to make a clandestine visit to Ancient Rome, so he selects a suitable CTC and sets off on his journey. He makes detailed notes of Julius C{\ae}sar's activities, and then returns to exactly the point in time and space from which he originally set off, so that his unauthorised absence cannot be detected. He repeats the same deception several days running. Using the information in his notes, he then writes an important academic paper and becomes famous.

Although this kind of story is familiar from science fiction, it requires a particular interpretation of what it means for a body to move in space and time. For consider what happens when the historian `returns to the present'. At this point in the journey, he occupies exactly the same position in time and space as when he originally set off on his journey -- \emph{but he is not constrained to repeat the same behaviour}, for rather than repeating the journey to Ancient Rome, he chooses instead to write an academic paper. Moreover, since he occupies the same spacetime position at both points on his journey, and his notepad is in his pocket both times, its contents should be the same both times -- but it contains notes when he returns which were not present when he set off.

At first sight this seems to suggest a fundamental logical inconsistency, leading to the conclusion that this kind of CTC exploitation is impossible. But there is in fact no contradiction present, provided we think of spacetime as a surface across which bodies move. The fact that a body can occupy a given position more than once, and be in different states each time, is hardly surprising given this interpretation; it is no different to a racing driver completing several laps of a Grand Prix, and then deciding on the next lap that he needs to take a pit stop so that his tyres can be replaced. He may pass through the same positions on the track several times, but he is not thereby constrained to repeat the same behaviour each time.

From the historian's point of view too there is no contradiction, because we have to ask ourselves \emph{in what sense has he travelled back in time}? Certainly, he cannot have done so relative to his own clock, because he considers himself to be moving always forwards in time at sublight speeds. His judgment \emph{I am in Ancient Rome} must therefore have been made relative to evidence provided by some independent witness (for example, he could ask a local trader what year it is, and whether the person standing in front of them is indeed Julius C{\ae}sar). It is entirely possible, of course, that the witness might observe multiple copies of the historian, but this is not contradictory either, for each copy is in a different state (when asked how old he is, each copy of the historian will give a different answer). From the viewpoint of the witness, the various copies of the historian are distinct objects, and there is no sense in which the historian is observed in different places at the same time.

Thus, neither the witness nor the historian observes anything contradictory as a result of his re-occupying a point in spacetime without being constrained to repeat the same behaviour. By the same argument, a computer can be sent around a CTC several times without its computation being forced to loop, so we could run a program for as long as we like simply by traversing a 5-second CTC as often as we like. If we arrange for the machine to produce some observable output if it halts (e.g. by printing a result) we can always decide after 5 seconds whether the program has halted, and if so what its result is, simply by checking the printer (which need not be on the CTC). Thus, not only does hypercomputation seem to be possible in this senario, but all problems in NP can apparently be solved deterministically in constant time, whence P = NP.

This argument is, however, logically flawed, for although a CTC returns an observer to an earlier point on his worldline, \emph{it does not follow that he can traverse the CTC a second time}, as we now show.

\subsection{Single-traversible CTCs}
\label{sec:single-traversible-ctcs}

In this section we introduce a crossed-ribbon toy spacetime that includes a single inhabited CTC. It is impossible for any massive body in this model to traverse more than one CTC, and no CTC can ever be traversed more than once. Our crossed-ribbon universe is in some respects non-standard (but we will nonetheless argue that it is a \emph{reasonable} model).

Our description of the crossed-ribbon universe is in three stages. First we construct a standard $(1+1)$-dimensional spacetime ($M$); then we describe the crossed-ribbon (\Ribbon); and then we populate \Ribbon with bodies and consider their worldlines. Finally, we argue that $M$ can be populated with bodies in such a way that observations in $M$ and \Ribbon are indistinguishable. Since $M$ gives an inherently acceptable description of spacetime, it follows that \Ribbon is also an observationally reasonable model.

\subsubsection{Construction of $M$.}
Following Andr\'eka and her colleagues \cite{Sze09,AMN07} we assume that spacetime is coordinatized by an ordered Euclidean field $Q$. Our starting point is a ribbon-shaped manifold, $M = \left(-\alpha,\alpha\right) \times Q$, where $\alpha \in Q$ (Fig. \ref{fig:looped-ribbon}). We regard this as a (1+1)-dimensional Minkowskian spacetime of infinite length and width $2\alpha$, with time flowing along the length of the ribbon, and space across its width. If we take $Q = \mathbb{R}$, then $\alpha$ should be finite, but for more general coordinatizations this need not be the case: for example, if $Q$ contains infinitesimals, $\alpha$ could be an infinite value in $Q$.

\subsubsection{Construction of \Ribbon.}
We now imagine wrapping $M$ around in 3-dimensional space so that it self-intersects at right-angles. We identify the overlapping regions, and call the resulting manifold \Ribbon (Fig. \ref{fig:looped-ribbon}).

\begin{figure}
\centering
\includegraphics[scale=0.4]{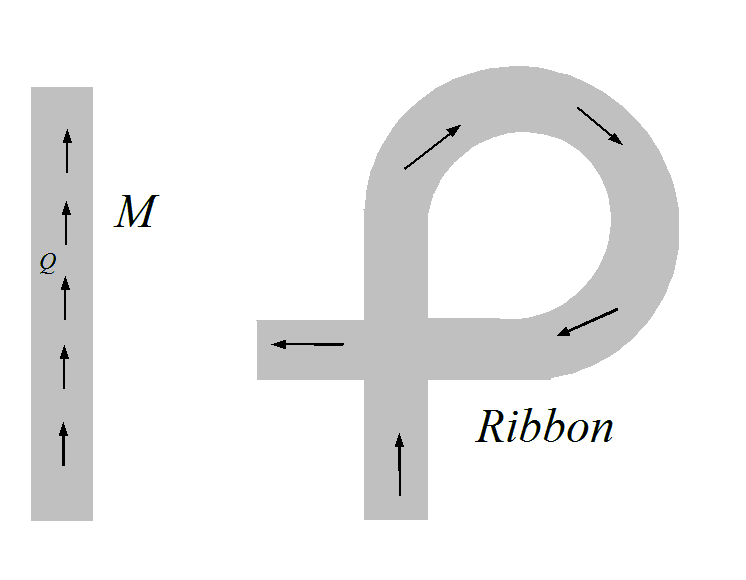}
\caption{The $(1+1)$-dimensional manifold $M$, and the immersed manifold \Ribbon generated by looping $M$ in such a way that it self-intersects at right-angles.}
\label{fig:looped-ribbon}
\end{figure}

\subsubsection{Worldlines in \Ribbon.}
The manifold $M$ is a standard $(1+1)$-dimensional spacetime model, and we can populate it with bodies in the usual way. As usual, we shall assume for convenience that time flows up the page, and define as an allowable worldline any path followed by a body that always travels at subluminal speed. Given any such worldline $w$ in $M$, we define $w'$ to be its image in \Ribbon. We then \emph{reflect} the worldline back into $M$. In other words, we determine what other bodies would need to be present in $M$, and following what worldlines, if a body following $w$ is to observe exactly the same series of events as a body moving along the corresponding path $w'$ in \Ribbon (Fig. \ref{fig:w-and-w'}). 

\begin{figure}
\centering
\includegraphics[scale=0.5]{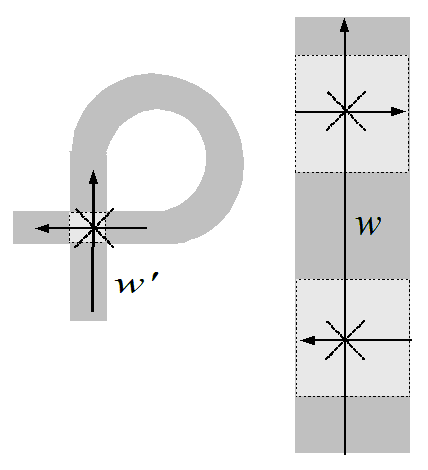}
\caption{Schematic showing how a worldline $w'$ in \Ribbon corresponds to three intersecting worldlines in $M$.}
\label{fig:w-and-w'}
\end{figure}

\subsection{Is \Ribbon a reasonable model of spacetime?}
Clearly, any body following $w'$ will encounter itself twice (once each time it crosses the self-intersection region), but it is important to realise that on each occasion the body considers its other incarnation to be travelling faster than light (FTL), because of the way \Ribbon intersects itself at right angles (which ensures that the time and space axes have been interchanged when the body encounters its past self). Reflecting this back into the original manifold $M$, the body traversing $w$ meets two FTL versions of itself as it moves along its worldline. The paths followed by these FTL bodies are fully determined once $w$ is specified.

Apart from the existence of bodies apparently moving at FTL speeds relative to one another, there is nothing unusual about this three-body version of $M$, and indeed FTL motion has long been a research topic in cosmological theory \cite{MTY88,GD00,Sch10}. We therefore claim that populating $M$ with these additional FTL bodies yields an entirely reasonable (toy) universe. By construction, however, the flow of events observed by the body following $w'$ in \Ribbon is identical to the flow observed by the body following $w$ in the three-body variant of $M$, whence \Ribbon must also be a reasonable (toy) universe.

\subsection{Existence of single-traversable CTCs}
The significance of \Ribbon lies in the nature of the path $w'$. Since this path includes a self-intersection, it implements a CTC. But as we have noted above, when the body meets its former self, its time and space axes have been interchanged, and it considers its past self to be travelling at FTL speeds. In order to re-traverse the CTC it would need to complete the `90 degree turn' at the point of intersection, and this would violate our presupposition that a body following a CTC always considers itself to be obeying physical laws and travelling slower than light.

Thus \Ribbon provides an example of a toy spacetime containing just one inhabited CTC (since \Ribbon only contains one body), where this CTC \emph{cannot} be traversed a second time.\footnote{We could, of course, add more bodies to the model if we wished, but doing so would add nothing to our argument.} As explained above, this undermines the argument that the existence of CTCs is sufficient, in itself, to ensure the existence of computational speed-up and hypercomputation. If we wish to deduce that CTCs move all problems (if any) in NP$\setminus$P into P, we need to impose the \emph{additional} condition that at least one CTC is multiply-traversable.

\section{Discussion}
\label{sec:discussion}

We have described an interpretation of CTC traversal in which spacetime exists as an independent entity across which bodies move subject to various laws; a body could potentially pass through a given location several times without being constrained to display identical behaviour subsequently. An alternative viewpoint is that a point in spacetime is fully defined by the set of bodies that exist there \cite{AMNS11}. According to this viewpoint, when the historian returns to the present, he has no choice but to re-traverse the CTC back to Ancient Rome. Since he occupies the same location in spacetime as his past self, he \emph{is} his past self and must behave accordingly. Similarly, since his notes are colocated with his notepad when he returns from Rome, they must also have been present before he set off.

\subsection{The Entropy of a CTC}
This second interpretation severely challenges certain key assumptions of everyday computer science. For simplity let us assume that one program statement can be executed every second, and that it takes precisely $n$ seconds to traverse some given CTC exactly once. If we run a program on a computer following this CTC, then once the $n^\mathrm{th}$ statement has been executed, the entire system will have returned to its original spacetime location, and so must have returned to its original state. It follows that no irreversible process can be implemented on a computer following a CTC. And yet there is no obvious reason we shouldn't be able to load our computer with any program we like.

To avoid the apparent contradictions inherent in this situation, we need to re-appraise the nature of CTC computation. Since reversibility requires that no information is lost from the system, we have to conclude that when an irreversible procedure is executed on the computer, the information lost during program execution must be preserved somehow in the computer's environment, i.e. the CTC itself. It is well-known that black holes have an associated entropy  \cite{Bek73}. What we are suggesting here is that, given this second interpretation of CTC traversal, a CTC can also have an associated entropy; indeed CTCs are perfect information repositories in the sense that information cannot be lost (since it must be available to re-initialise the computation). It can be argued therefore that the CTC is an active component in the system which overrides the intended behaviour of the program, since the computer is forced to return to its initial state regardless of its underlying specification. Indeed, this can be seen as a mechanism enforcing the Novikov self-consistency principle \cite{Nov90} in the context of CTC computation.\footnote{Alternatively, one can argue that the information capacity of a CTC is strictly limited; this ensures that the computer cannot be provided with arbitrarily complex programs, and those programs cannot be supplied with arbitrarily complex inputs. Our research into this question is ongoing.}

\subsection{Single-traversable CTCs and computational speed-up}
Given our original interpretation of CTC traversal, \Ribbon shows that CTCs need not be traversable more than once. This second interpretation of CTC traversal likewise concludes that repeated traversal of a CTC cannot lead to computational speed-up, because any lengthy computation would be forced to return to its initialisation state rather than running to completion. The question remains whether it is possible to use CTCs to speed up computation, where we \emph{voluntarily} restrict ourselves to traversing CTCs no more than once. Indeed such schemes are described in the literature \cite{Bru03}, but these schemes make the additional
assumption of \emph{causal consistency}, using it to deduce that CTC-computation can solve PSPACE-problems \cite{AW09}.

As with CTCs themselves, there is no convincing evidence that causal consistency is experimentally necessary. What, then, can we deduce if we impose no additional constraints, and simply regard CTCs to be used as an implementation of time travel. As we illustrated in Fig. \ref{fig:algorithm-A'}, the availability of time travel can be used to show that P = NP. But the situation is not entirely clearcut, because we have assumed in Fig. \ref{fig:algorithm-A'} that information can be sent back to time $0$ no matter how long we have to wait for $A$ to complete its execution. But there is no guarantee that this is the case. For example, suppose the maximum time any CTC can take a traveller back is 5 seconds; then as soon as an input is provided which causes $A$ to run for more than 5 seconds, algorithm $A'$ will be invalid.

This suggests that, even in the presence of time travel, we cannot necessarily reduce problems in NP$\setminus$P (if any) to problems in P. For the remainder of this paper we therefore assume, to the contrary, that \PNP and ask what consequences this assumption entails.

\subsubsection{CTCs and \PNP.}
We assume the existence of some arbitrary observer (typically a computer) $O$. Given any spacetime location $X$ on $O$'s worldline, write $X^{+}$ for the set of timelike paths starting at $X$ that are traversable (in theory) by observers co-moving with $O$ at $X$. From $O$'s viewpoint, these paths are all future-pointing, and it is possible for $O$ to send information (e.g. by rocket) along any of these paths without requiring lightspeed or faster-than-light travel.

Some of these paths may intersect $O$'s worldline at points other than $X$. If these points lie to the past of $X$ from $O$'s point of view, then CTCs are present, and we can ask to what extent they can be exploited computationally. For simplicity, we will assume that $O$ can identify whether any given path in $X^+$ leads to a point $Y$ in his past, and can also identify the point $Y$ itself (i.e., how far back into his past the path takes him). It is extremely unlikely, of course, that such properties of CTCs would ever be so conveniently decidable.

Write \Past{X} for the set of all such points $Y$, i.e. those points on $O$'s past worldline that he can revisit by following paths in $X^{+}$. For each such intersection $Y$, write \HowFarBack{X}{Y} for the amount of time that originally passed, from $O$'s point of view, in travelling from $Y$ to $X$. In other words, if $O$ chooses to follow the path in question, how far into his own past will it take him? For simplicity we shall assume that all durations are measured in seconds.

Given any time $t$, write $X(t)$ for the point $X$ on $O$'s worldline that has time coordinate $t$ as coordinatized by $O$, and define the set of \emph{time differences} available at time $t$ to be the set
\[
   D(t) =
   \{
   		\HowFarBack{X(t)}{Y} : Y \in \Past{X(t)}
   \}
\]
and let
\[
   D^*(t) = \begin{cases}
   		\sup D(t) & \mbox{ if  $D(t)$ is bounded above } \\
      \mbox{undefined} & \mbox{ otherwise } \enspace .
      \end{cases}
\]
In essence, the function $D^*(t)$ tells us how far back $O$ can travel into his own past if he sets off on his journey at time $t$. If $D^*(t)$ is undefined, there is no limit to how far back $O$ can travel. Likewise we write $R(t)$ for the set of past times $O$ on his own worldline that are \emph{reachable} by setting off from $X(t)$, and $R^*(t)$ for the infimum of these reachable times, i.e.
\[
  R(t) = \{ t - t' : t' \in D(t) \}
\]
and
\[
	R^*(t) = \begin{cases}
   		\inf R(t) & \mbox{ if  $R(t)$ is bounded below } \\
      \mbox{undefined} & \mbox{ otherwise } \enspace .
      \end{cases}
\]

We show that when \PNP, $R^*(t)$ (equivalently, $D^*(t)$) must be defined for all sufficiently large $t$.

\begin{lemma}
\label{lem:R-is-downwards-closed}
If $t' \leq t$, then $R(t) \subseteq R(t')$.
\end{lemma}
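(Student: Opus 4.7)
The strategy is the obvious ``wait-then-jump'' argument: whatever CTC-route takes $O$ from $X(t)$ back to a past point $X(s)$ is still available to $O$ starting from the earlier location $X(t')$, because $O$ can first travel along his own worldline from $X(t')$ up to $X(t)$ and only then enter the CTC. Consequently every past time reachable from $X(t)$ should also be reachable from $X(t')$, which is exactly the claimed set-inclusion $R(t) \subseteq R(t')$.

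To formalise, I would pick any $s \in R(t)$ and unpack the definitions. We have $s = t - d$ for some $d \in D(t)$, so there exists $Y \in \Past{X(t)}$ with $\HowFarBack{X(t)}{Y} = d$; since $\HowFarBack{X(t)}{Y}$ measures elapsed proper time along $O$'s worldline from $Y$ to $X(t)$, it must be that $Y = X(s)$. By the definition of $\Past{X(t)}$, there is a path $\gamma \in X(t)^{+}$ terminating at $X(s)$. Let $\sigma$ be the segment of $O$'s own worldline from $X(t')$ to $X(t)$, which is future-pointing and traversable since $t' \le t$. The concatenation $\sigma \cdot \gamma$ starts at $X(t')$, is future-pointing throughout, and terminates at $X(s)$, so it lies in $X(t')^{+}$ and witnesses $X(s) \in \Past{X(t')}$. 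Since $\HowFarBack{X(t')}{X(s)} = t' - s$ lies in $D(t')$, we conclude $s = t' - (t' - s) \in R(t')$.

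The main obstacle, although essentially routine, is justifying the concatenation as a bona fide element of $X(t')^{+}$: one must verify that prefixing a path in $X(t)^{+}$ with a segment of $O$'s worldline yields a path that is ``traversable in theory by an observer co-moving with $O$ at $X(t')$''. This reduces to the standard fact that future-directed timelike paths compose, but the precise form depends on the paper's smoothness and join conventions for allowable worldlines. A smaller secondary point is that the argument tacitly assumes $s \le t'$, so that $X(s)$ genuinely lies on $O$'s past worldline relative to $X(t')$; this is the natural reading whenever CTCs reach sufficiently far into the past, which is the regime of interest for the subsequent results on $R^{*}(t)$.
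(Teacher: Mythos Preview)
Your approach is exactly the paper's: the paper's proof is a single sentence stating that any path in $X(t)^{+}$ can be prepended by the segment of $O$'s worldline from $t'$ to $t$ to yield a path in $X(t')^{+}$. Your expanded version unpacks the same idea, and the edge case you flag (needing $s \le t'$ so that $X(s)$ lies in $O$'s past relative to $X(t')$) is a genuine subtlety that the paper's one-line proof simply glosses over; it does not affect the downstream applications, which concern unboundedness below.
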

\begin{proof}
Any path in $X(t)^+$ can be prepended by the section of $O$'s worldline running between times $t'$ and $t$ to generate a path in $X(t')^+$.
\qed
\end{proof}

\begin{theorem}
\label{thm:F-bounded-above}
Suppose \PNP. Then, for all sufficiently large $t$, $R(t)$ is bounded below, and hence $R^*(t)$ is defined.
\end{theorem}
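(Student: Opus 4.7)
My plan is to argue by contraposition: I would assume the conclusion fails for some observer $O$ and then deduce $P = NP$. First, I would suppose there is no threshold $T$ beyond which $R(t)$ is always bounded below; then the set of times at which $R(t)$ is unbounded below would be cofinal in $O$'s worldline. Combined with Lemma~\ref{lem:R-is-downwards-closed}, this propagates downwards: for any $s$, I could pick some $t \ge s$ with $R(t)$ unbounded below, and the inclusion $R(s) \supseteq R(t)$ would force $R(s)$ to be unbounded below too. The upshot is that $R(s)$ is unbounded below for every $s$; equivalently, at every instant of $O$'s life he is capable, in principle, of dispatching information arbitrarily far into his own past.

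The next step is to instantiate the informal algorithm $A'$ of Fig.~\ref{fig:algorithm-A'}. Let $L$ be any problem in NP, solved deterministically by a program $A$ with (potentially) superpolynomial runtime. When $O$ is presented with an input $x$ at time $t_0$, he launches $A(x)$; let $t_1$ denote the (finite) time at which $A$ halts with $v = A(x)$. Unboundedness of $R(t_1)$ guarantees some $t_{-1} \le t_0$ in $R(t_1)$, so $O$'s future self can transmit the pair $(x, v)$ back to his past self at $t_{-1}$, who stores it and, upon eventually receiving input $x$ at time $t_0$, emits the matching answer $v$. From $O$'s own proper-time perspective the answer is available the instant the input is received, so $L$ would be decided in constant time. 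Since $L \in \mathrm{NP}$ was arbitrary, $\mathrm{NP} \subseteq \mathrm{P}$, contradicting the hypothesis \PNP.

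The hard part will be the second step, because it tacitly requires that the CTC geometry admits a globally self-consistent execution of $A'$: the value held by past-$O$ must coincide with the value eventually transmitted by future-$O$. Under the first interpretation of CTC traversal developed in Section~\ref{sec:version-one} this loop closes trivially---$A$ is deterministic and future-$O$ runs it regardless of what his past self already knows---but since this is the only interpretation in force for the present theorem, I would appeal to it explicitly rather than silently. The remaining ingredients (selecting $t_{-1}$ from the unbounded set $R(t_1)$, and carrying the stored answer forward along $O$'s own worldline from $t_{-1}$ up to $t_0$) are immediate from the definitions and should not warrant further commentary.
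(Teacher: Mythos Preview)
Your proof is correct and follows essentially the same route as the paper: negate the conclusion, use Lemma~\ref{lem:R-is-downwards-closed} to propagate unboundedness of $R(t)$ to all times, and then implement the time-travel algorithm of Fig.~\ref{fig:algorithm-A'} to force $\mathrm{P}=\mathrm{NP}$, contradicting the hypothesis. Your extra paragraph on self-consistency under the first interpretation is a reasonable caveat that the paper leaves implicit.
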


\begin{proof}
Suppose to the contrary that there exists an unbounded increasing sequence of times $t$ at which $R(t)$ has no lower bound. By Lemma \ref{lem:R-is-downwards-closed}, $R(t)$ must be unbounded below for all $t$.

Let $A$ be a deterministic algorithm for solving some problem in NP. We use $A$ to define a new algorithm $B$ to be implemented on a computer co-moving with $O$, with the following behaviour (essentially a generalisation of the algorithm presented in Sect. \ref{sec:introduction}).

Given $n$, $O$ resets his clock to 0, waits one second and then checks whether any output has yet been generated. One second later he starts running $A(n)$. After $A(n)$ has eventually halted at time $t$ (say), $B$ travels back to some $T < 0$ in $R(t)$, waits until he re-encounters time $0$, and then publishes the result in time for his earlier incarnation to observe it at time $1$. As before this implies that problems in NP can be solved deterministically in constant time, whence P = NP (contrary to assumption).

Therefore no unbounded increasing sequence of times $t$ exists at which $R(t)$ is unbounded below, and the result follows.
\qed
\end{proof}

\begin{corollary}
\label{cor:R-interval}
Suppose $R^*(t)$ is defined, and suppose $t' \in (R^*(t), t]$. Then $R^*(t')$ is also defined, and $R^*(t) = R^*(t')$.
\end{corollary}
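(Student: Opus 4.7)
The plan is to prove the two inequalities $R^*(t') \leq R^*(t)$ and $R^*(t') \geq R^*(t)$; together they give both the definedness of $R^*(t')$ and the claimed equality. The first inequality is immediate from Lemma~\ref{lem:R-is-downwards-closed}: since $t' \leq t$ we have $R(t) \subseteq R(t')$, so once $R(t')$ is shown to be bounded below, necessarily $\inf R(t') \leq \inf R(t) = R^*(t)$.

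The substantive step is to show that $R^*(t)$ is itself a lower bound for $R(t')$, which at once forces $R(t')$ to be bounded below and yields the second inequality. I argue by contradiction: suppose some $s \in R(t')$ satisfies $s < R^*(t)$. Because $t' > R^*(t) = \inf R(t)$, we may pick $r \in R(t)$ with $r \leq t'$. Applying Lemma~\ref{lem:R-is-downwards-closed} to the pair $r \leq t'$ yields $R(t') \subseteq R(r)$, and so $s \in R(r)$. Now concatenate: $r \in R(t)$ gives a timelike future-pointing path in $X(t)^+$ ending at $X(r)$, and $s \in R(r)$ gives a timelike future-pointing path in $X(r)^+$ ending at $X(s)$. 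Gluing these produces a single path in $X(t)^+$ terminating at $X(s)$, so $s \in R(t)$, contradicting $s < \inf R(t) = R^*(t)$.

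The main obstacle is making the concatenation step rigorous. Formally, a path in $X(r)^+$ is traversable by observers co-moving with $O$ at $X(r)$, so for the concatenation to produce a legitimate path in $X(t)^+$ one needs the observer who arrives at $X(r)$ along the first path to be free to embark upon the second. This is exactly the same licence the author already uses in the proof of Lemma~\ref{lem:R-is-downwards-closed}, where a path in $X(t)^+$ is prepended to the worldline segment between $t'$ and $t$ to produce a path in $X(t')^+$; it reflects the paper's standing assumption that bodies on a worldline can adjust their velocity at a spacetime point. Granting this, the two inequalities together close the argument.
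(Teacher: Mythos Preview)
Your proof is correct and follows essentially the same route as the paper's: both arguments establish $R(t') \subseteq R(t)$ by selecting an intermediate time $r \in R(t)$ with $r < t'$, using it to bridge from $X(t)$ to paths originating at $X(t')$, and then combining with the Lemma's inclusion $R(t) \subseteq R(t')$ to conclude $R(t) = R(t')$. The only cosmetic difference is that you wrap the key inclusion in a contradiction argument (which is unnecessary, since your second paragraph already proves directly that every $s \in R(t')$ lies in $R(t)$), and you invoke the Lemma explicitly for the ``wait from $r$ to $t'$'' step where the paper spells it out by hand.
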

\begin{proof}
By assumption, $t' > R^*(t)$, so since $R^*(t) = \inf R(t)$ there must exist $T \in R(t)$ satisfying $t' > T \geq R^*(t)$. Consequently, $O$ can travel back from $X(t)$ to arrive back on his past worldline at time $T$, then wait (if necessary) until time $t'$ before setting off on any path in $R(t')$. Thus any past time reachable by $O$ from $X(t')$ is also reachable from $X(t)$, whence $R(t') \subseteq R(t)$. Since $R(t)$ is bounded below, the same must be true of $R(t')$, whence $R^*(t')$ is defined, as claimed.

Lemma \ref{lem:R-is-downwards-closed} tells us conversely that $R(t) \subseteq R(t')$ (since $t' \leq t$), and combining the two inclusions gives $R(t) = R(t')$, whence the claim follows.
\qed
\end{proof}

What do these simple results tell us? Theorem \ref{thm:F-bounded-above} tells us that, even if an observer is able to travel arbitrarily far back in time when he is young, he will eventually lose that capability as he grows older, and his reach into the past will become finite. Corollary \ref{cor:R-interval} then tells us as he grows older, he loses access to more and more of his past. This computational assumption seems to be telling us something also about cosmological structure. The exact nature of the relationship depends on an analysis of how fast the function $R^*$ grows, and remains an open question.

\section*{Acknowledgements}
I would like to thank Gergely Sz\'ekely, whose comments have greatly streamlined my thinking during the development of the ideas presented here.

\bibliography{pc2011stannett}

\end{document}